\def\IT{\it\aftergroup\/}
\newcommand{\decproblem}[3]{\begin{tabular}{ll}
{\bf Problem:}\hspace*{1em} & {\sc #1}\\ {\IT Instance}: & #2\\ {\IT Question:} & #3
\end{tabular}}
\newcommand\I{$\mathcal{I}$}
\newcommand\V{$\mathcal{V}$}
\newcommand\IV{{$\mathcal{I}$\kern .05em$\mathcal{V}$}\xspace}
\newcommand\IVproblem{{\sc \IV-matching}\xspace}
\newcommand\threeDproblem{{\sc 3D-matching}\xspace}
\newcommand\NP{{\sf NP}\xspace}
\begin{document}

\title{\IV-matching is strongly \NP-hard}

\author{Lukáš Folwarczný\inst{1}
\and
Dušan Knop\inst{2}\fnmsep\thanks{Author supported by the project SVV--2015--260223, by the project CE-ITI P202/12/G061 of GA ČR and by the GA UK project 1784214.}}

\institute{
Computer Science Institute,\\
Faculty of Mathematics and Physics,\\
Charles University in Prague, Czech Republic\\
\email{folwar@iuuk.mff.cuni.cz}
\and
Department of Applied Mathematics,\\
Faculty of Mathematics and Physics,\\
Charles University in Prague, Czech Republic\\
\email{knop@kam.mff.cuni.cz}
}

\maketitle

\begin{abstract}
\IV-matching is a~generalization of perfect bipartite matching. The complexity of finding
\IV-matching in a~graph was posted as an open problem at the ICALP 2014 conference.

In this note, we resolve the question and prove that, contrary to the expectations of the
authors, the given problem is strongly \NP-hard (already in the simplest non-trivial case
of four layers). Hence it is unlikely that there would be an efficient (polynomial or
pseudo-polynomial) algorithm solving the problem.

\begin{keywords}
\IV-matching, perfect matching, \NP-completeness
\end{keywords}
\end{abstract}

\section{Introduction}
The perfect matching problem is one of the central and well studied problems in graph theory.
For an exhaustive overview of this area of research we refer to the monograph of Lovász and Plummer~\cite{lovasz-plummer}. An important part of research is devoted to special classes of graphs (especially to bipartite graphs) as well as various strengthening of the former problem (for example T-joins).
For the bipartite setting special attention is focused on min--max type theorems as is for example the pioneering work of König~\cite{konig}.

The complexity status of the \IVproblem problem was posted as an open problem in the full version of
the article by Fiala, Klavík, Kratochvíl and Ne\-de\-la~\cite{klavik} at the ICALP~2014
conference. \IV-matching is a~generalization of perfect bipartite matching to
multi-layered graphs.  Informally, there is a~classical one-to-one matching
between layers $2k-1$ and $2k$, but a~``two-to-one matching'' between layers $2k$
and $2k+1$. There are further restrictions imposed on the matching.

\paragraph{Motivation.}
Fiala et al.~\cite{klavik} studied algorithmic aspects of regular graph covers
(a~graph~$G$ covers a~graph~$H$ if there exists a~locally bijective homomorphism from~$G$
to~$H$; regular covers are very symmetric covers). The problem of determining for two
given graphs~$G$ and $H$ whether $G$ regularly covers~$H$ generalizes both the graph
isomorphism problem and the problem of Cayley graphs~\cite{cayley} recognition. The need to determine
whether there is an~\IV-matching in a~given graph appears naturally in
an algorithm proposed by the authors. Therefore, they ask whether there is an
efficient algorithm for this task.

\paragraph{Our contribution.}
We show that the problem is strongly \NP-complete, already in the simplest non-trivial
case of four layers, and therefore a~polynomial or pseudo-polynomial algorithm is unlikely.
Thus our results imply a dichotomy for the complexity of the \IVproblem problem.
The concepts around \NP{} are explained for example by Arora and
Barak~\cite{arora-barak}.

\paragraph{Outline of the note.} A~detailed problem definition is given in
Section~\ref{sec:definition} and the proof of \NP-completeness is given in
Section~\ref{sec:np-completeness}.

\section{Problem Definition}
\label{sec:definition}

The formal definition of \IV-matching is as follows.

\begin{definition}[Layered graph]
Let $G=(V,E)$ be a~bipartite graph. We say that $G$ is a~{\em layered graph} if $G$ fulfills
the following conditions:
\begin{itemize}
\item Vertices are partitioned into $\ell$~sets $V_1, \dots, V_\ell$ called {\em layers}. There
are only edges between two consecutive layers $V_k$ and $V_{k+1}$ for $k=1,\dots,\ell-1$.
\item Each layer is further partitioned into {\em clusters}.
\item Edges of~$G$ are described by edges on clusters; we call these edges {\em
macroedges}. If there is a~macroedge between two clusters, then vertices of these two
clusters induce a~complete bipartite graph. If there is no macroedge, then these vertices
induce an edge-less graph.
\item Macroedges between clusters of layers $V_{2k}$ and $V_{2k+1}$ form a~matching (not
necessarily a~maximum matching).
\item There are no conditions for macroedges between clusters of the layers $V_{2k-1}$ and $V_{2k}$.
\end{itemize}
\end{definition}

\begin{definition}[\IV-matching]
Let $G$ be a layered graph.
{\em \IV-matching} is a~subset of edges $M\subseteq E$ such that: Each vertex from an even
layer $V_{2k}$ is incident to exactly one vertex from $V_{2k-1}\cup V_{2k+1}$.
Each vertex from the layer $V_{2k-1}$ is either incident to exactly two vertices from
$V_{2k-2}$, or exactly one vertex from $V_{2k}$ (these two options are exclusive).

It implies that edges between layers $V_{2k-1}$ and $V_{2k}$ form a~matching (\I-shapes)
and edges between layers $V_{2k}$ and $V_{2k+1}$ form independent \V-shapes with centers
in the layer $V_{2k+1}$.
\end{definition}

\begin{figure}
\centering
\includegraphics{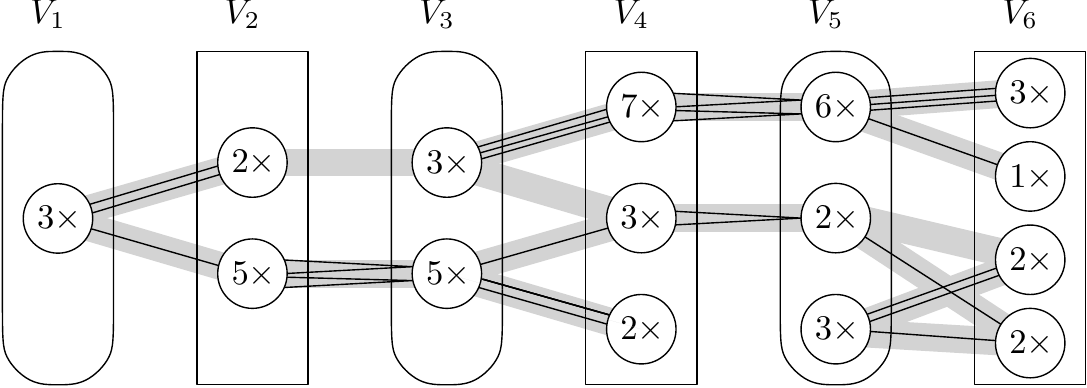}
\caption{Layered graph with $6$ layers. Numbers denote the number of vertices in a cluster.
Macroedges are represented by shaded edges, while edges form an \IV-matching.}
\end{figure}

As the \IVproblem problem we denote the decision problem of finding out whether there is an
\IV-matching in a~given graph.
\vskip 0.5cm
\decproblem{\IVproblem}
{Graph $G$ described by clusters and macroedges.}
{Is there an \IV-matching in $G$?}
\vskip 0.5cm

For $\ell = 2$ the problem is just an ordinary bipartite matching. For odd values
of $\ell$, the clusters of the last layer $V_\ell$ can be matched in only one possible way,
thus this odd case reduces to the case with $(\ell-1)$ layers. The first interesting case
is therefore $\ell = 4$ and we prove in the next section that this case is already \NP-complete.

\section{\NP-completeness}
\label{sec:np-completeness}

We prove \NP-completeness of the \IVproblem problem using a~reduction from the 
\threeDproblem{} problem  in this section.

\paragraph{Three-dimensional matching.} In the \threeDproblem\ problem we are given
a~hypergraph $H = (U,F)$. The hypergraph is tripartite, i.e., the set of vertices is
partitioned into three equally sized partites $X$, $Y$ and $Z$. Each hyperedge consists of exactly
one vertex from each partite, thus $F \subseteq X \times Y \times Z$.  A~set of pairwise
disjoint hyperedges covering all vertices is called a {\IT perfect matching}.

\vskip 0.5cm
\decproblem{\threeDproblem}
{A~tripartite hypergraph $H$.}
{Is there a~perfect matching in $H$?}
\vskip 0.5cm

The \threeDproblem\ problem is well-known to be \NP-complete; it is actually the seventeenth problem
in the Karp's set of 21 \NP-complete problems~\cite{karp}.

\begin{theorem}
The \IVproblem\ problem is strongly \NP-complete, already in the case of $\ell = 4$.
\end{theorem}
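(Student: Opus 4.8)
The plan is to establish membership in \NP{} and then to give a polynomial-time reduction from \threeDproblem{} to the four-layer \IVproblem{} problem; since \threeDproblem{} is \NP-complete and all numbers (cluster sizes) produced by the reduction will be bounded by a polynomial in the instance size, this simultaneously yields \emph{strong} \NP-completeness. Membership is immediate: an \IV-matching is a subset of at most $|E|$ edges, and the two local degree conditions of the definition can be checked in polynomial time, so a witness can be verified efficiently. The substance of the argument is therefore the reduction and its correctness.

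Before building gadgets I would analyse what an \IV-matching can look like when $\ell=4$. By the definition the only admissible connected pieces are single \I-shapes between $V_1$ and $V_2$, single \I-shapes between $V_3$ and $V_4$, and \V-shapes with a centre in $V_3$ and two legs in $V_2$; moreover every vertex of $V_1$ and of $V_4$ is saturated by an \I-shape, every $V_2$-vertex is used by exactly one piece (an \I-shape to $V_1$ or a leg of a \V-shape), and every $V_3$-vertex is used by exactly one piece (the centre of a \V-shape or an \I-shape to $V_4$). The key structural fact is that, because the macroedges between $V_2$ and $V_3$ form a matching on clusters, every $V_3$-cluster is joined to at most one $V_2$-cluster, so the legs of all \V-shapes centred in a cluster $C$ must be drawn from the single $V_2$-cluster matched to $C$. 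This locality is what makes a matched pair of clusters behave as a binary ``selector''.

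The gadget for a triple $e=(x,y,z)\in F$ is such a matched pair: a small $V_3$-cluster $C_e$ together with the $V_2$-cluster $D_e$ joined to it, with macroedges from $D_e$ into $V_1$ element-clusters and from $C_e$ into $V_4$ element-clusters, the sizes being chosen so that the pair admits exactly two local completions. In the \emph{selected} state the vertices of $C_e$ act as \V-centres and consume their legs in $D_e$, feeding nothing into the terminal layers; in the \emph{unselected} state the vertices of $C_e$ go to $V_4$ and those of $D_e$ go to $V_1$, feeding the element-clusters of $x$, $y$ and $z$. For each element $u\in X\cup Y\cup Z$ of degree $d_u$ I would place, in the appropriate terminal layer, a cluster of size $d_u-1$; since every $V_1$- and $V_4$-vertex must be saturated, such a cluster absorbs the feed of at most $d_u-1$ unselected incident gadgets, forcing at least one incident gadget to be selected, while a global counting argument forces the total number of selected gadgets to be exactly $|X|=|Y|=|Z|$. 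Reading ``selected'' as ``chosen hyperedge'' then makes \IV-matchings correspond to perfect matchings of $H$.

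The step I expect to be the main obstacle is faithfully encoding the three-way (triple) constraint using a gadget whose only nontrivial primitive is the \emph{two}-to-one \V-shape, while ruling out ``cheating'' \IV-matchings. A single matched pair cleanly transmits its selection signal to at most two terminal clusters — one through its legs in $V_1$ and one through its centre in $V_4$ — so the delicate point is to guarantee that each element of a selected triple is covered exactly once, i.e.\ that the feeds into $V_1$ cannot be re-routed so as to meet all capacities while leaving some element uncovered. Pinning this down is where the exact cluster sizes, auxiliary consistency clusters, and the exactness of the saturation requirement on $V_1$ and $V_4$ must all be fixed carefully. The backward direction (turning a perfect matching of $H$ into an \IV-matching) is then a routine construction, whereas this forward direction — extracting a genuine perfect matching from an arbitrary \IV-matching — carries the real weight of the proof.
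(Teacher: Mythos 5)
Your overall frame — membership in \NP, reduction from \threeDproblem, a two-state gadget per hyperedge built from a matched $V_2$/$V_3$ cluster pair, and a counting argument fixing the number of gadgets in each state — is the same as the paper's, and your structural analysis of four-layer \IV-matchings is correct. But your specific encoding has a genuine gap, and it is exactly the one you flag yourself. You let a \emph{selected} gadget be the one realized as a \V-shape (feeding nothing to the terminal layers), so the element clusters must have the complementary size $d_u-1$ and must be filled by the \emph{unselected} gadgets. On the $Z$-side this works, because each $V_3$-cluster $C_e$ has a single vertex and a single $V_4$-neighbour, so the count per element is forced. On the $X$/$Y$-side it does not: the cluster $D_e$ has two vertices and macroedges to \emph{both} the cluster of $x$ and the cluster of $y$, so an unselected gadget may send both of its $V_2$-vertices into $x$'s cluster and none into $y$'s. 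The per-element equations then only say that the total feed into $u$'s cluster equals $d_u-1$, which is consistent with some element having zero selected incident gadgets (all $d_u$ incident gadgets unselected, distributing $d_u-1$ feeds among them) while another has two. Your extracted set of ``selected'' hyperedges therefore need not be a perfect matching, and no global counting argument repairs this; you would indeed need extra consistency gadgets, which you do not supply.

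The paper avoids the problem entirely by reversing the semantics and trivializing the capacities: each element is its \emph{own singleton cluster} ($X\cup Y$ in $V_1$, $Z$ in $V_4$), each hyperedge $e=\langle x,y,z\rangle$ gets a two-vertex cluster $\{x_e,y_e\}$ in $V_2$ and a one-vertex cluster $\{z_e\}$ in $V_3$, and a \V-shape now means $e$ is \emph{not} chosen, while three \I-shapes mean $e$ \emph{is} chosen and covers $x,y,z$. Since $x$ has capacity one and must be saturated, exactly one incident gadget sends exactly one of its two $V_2$-vertices to $x$; the other vertex cannot also go to $x$, so it must go to $y$, and the re-routing you worry about is impossible by construction. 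The counting ($3n$ vertices in $V_1\cup V_4$ versus $3m$ in $V_2\cup V_3$, hence $m-n$ \V{}s and $n$ chosen hyperedges) then finishes the forward direction in two lines. I recommend you adopt the ``\I\ $=$ chosen, singleton element clusters'' convention: it is not merely cosmetically simpler, it is what makes the hard direction of the correctness proof go through.
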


\begin{proof}
It is easy to see that the problem is in the class \NP{}: The \IV-matching itself is a~polynomial
certificate and its correctness can be directly verified.

In order to show that the problem is \NP-hard, we construct a~polynomial-time reduction
from the \threeDproblem{} problem to the \IVproblem problem.
Let the hypergraph $H = (U, F)$ be an instance of the
\threeDproblem\ problem with partites $X$, $Y$, $Z$ and let us write 
$n = |X| = |Y| = |Z|$ and $m= |F|.$

We construct the instance~$G = (V,E)$ of the \IVproblem\ problem as follows. We put vertices from $X$
and $Y$ into the layer~$V_1$ and vertices from $Z$ into the layer~$V_4$. Each vertex forms
its own cluster of the size one. Then for each hyperedge $e=\langle x,y,z\rangle$ we add
a~cluster with two new vertices $x_e, y_e$ into the layer $V_2$ and a~cluster with one new
vertex $z_e$ into the layer $V_3$. We then add these four edges on clusters: $\langle
\{x\}, \{x_e,y_e\} \rangle$, $\langle\{y\}, \{x_e, y_e\}\rangle$, $\langle\{x_e, y_e\},
\{z_e\}\rangle$ and $\langle\{z_e\},\{z\}\rangle$.

The key idea of our construction is that \V{}s in an \IV-matching solution translate to hyperedges {\IT
not} present in the perfect matching. An example is given in Fig.~\ref{fig-reduction}.

\begin{figure}
\centering
\includegraphics{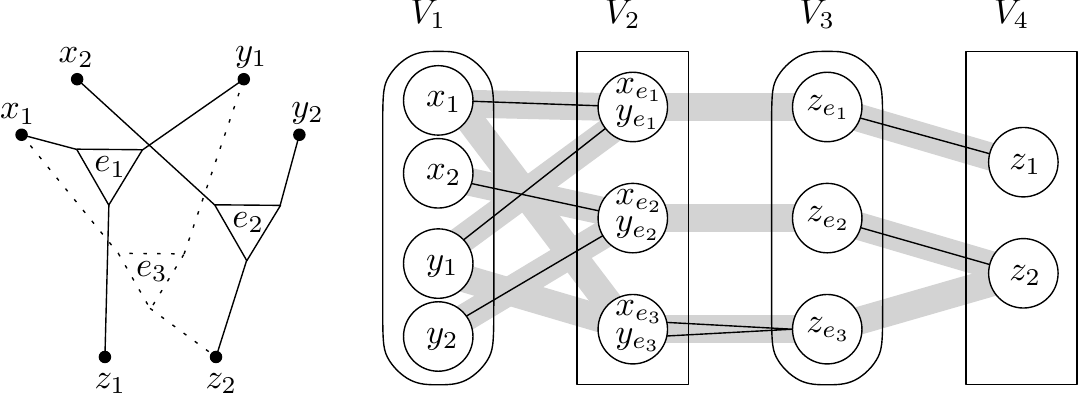}
\caption{An instance of the \threeDproblem\ problem with a~perfect matching and an equivalent instance
of the \IVproblem\ problem instance with the corresponding \IV-matching.}
\label{fig-reduction}
\end{figure}

The resulting instance of the \IVproblem\ problem has $3n+3m$ vertices, $4m$~edges and it can be
constructed directly in polynomial time. We shall now prove that there is an~\IV-matching
in $G$ if and only if there is a~perfect matching in the original hypergraph $H$.

$\Rightarrow$ Let $M$ be an \IV-matching in $G$. Observe that for each hyperedge $e \in F$
necessarily either all vertices $x_e$, $y_e$ and $z_e$ are matched with \I s or all
of them are matched with a~single \V.

Let us put into our matching of~$H$ all hyperedges $e \in F$ such that $x_e$, $y_e$ and
$z_e$ are matched with \I s. Because $M$ is an \IV-matching, every vertex of the original
hypergraph is connected by an \I\ to exactly one vertex in $V_2 \cup V_3$ and we chose the
corresponding edge to our matching so all vertices of the hypergraph are covered. Because
there are $3n$ vertices in $V_1 \cup V_4$ and $3m$ vertices in $V_2 \cup V_3$, the
number of \V s used is $m-n$ and so we used $n$~hyperedges in our matching of the
hypergraph. This proves that we constructed a~perfect matching.

$\Leftarrow$ Let $N \subseteq F$ be a~perfect matching in the hypergraph~$H$. For each
hyperedge $e = \langle x,y,z\rangle$ we connect by \I s the pairs of vertices $\{x, x_e\}$, $\{y,
y_e\}$ and $\{z, z_e\}$. For each hyperedge $e \notin N$ we cover the vertices $x_e, y_e, z_e$
by a~\V. Note that every vertex in $V_2 \cup V_3$ is covered by exactly one
\I\ or one \V.

Because the matching~$N$ covers all vertices in~$H$, every vertex in $V_1 \cup V_4$ is covered
by at least one~\I. Because we put $3n$ \I s into the graph, every vertex is covered by
exactly one~\I. This implies that we obtained a~correct \IV-matching.\\

This finishes the proof of \NP-completeness.
Because we only use clusters of size two in the reduction, the problem stays naturally
\NP-complete when input is encoded in unary. Therefore, strong \NP-completeness is
established as well.
\qed
\end{proof}

\subsubsection*{Acknowledgements.}
We would like to thank Pavel Klavík for bringing the \IVproblem problem to our
attention and a~fruitful discussion.  Moreover, we wish to express our sincere gratitude
to the organizers of the summer REU program 2014 at the Center for Discrete Mathematics
and Theoretical Computer Science (DIMACS), Rutgers University where the main part of our
research was conducted.

\bibliographystyle{splncs03}
\bibliography{iv-matching}

\begin{thebibliography}{1}
\providecommand{\url}[1]{\texttt{#1}}
\providecommand{\urlprefix}{URL }

\bibitem{arora-barak}
Arora, S., Barak, B.: Computational Complexity - {A} Modern Approach. Cambridge
  University Press (2009)

\bibitem{cayley}
Cayley, P.: Desiderata and suggestions: No. 2. the theory of groups: Graphical
  representation. American Journal of Mathematics  1(2),  174--176 (1878),
  \url{http://www.jstor.org/stable/2369306}

\bibitem{klavik}
Fiala, J., Klavík, P., Kratochvíl, J., Nedela, R.: Algorithmic aspects of
  regular graph covers with applications to planar graphs. In: Esparza, J.,
  Fraigniaud, P., Husfeldt, T., Koutsoupias, E. (eds.) Automata, Languages, and
  Programming, Lecture Notes in Computer Science, vol. 8572, pp. 489--501.
  Springer Berlin Heidelberg (2014),
  \url{http://dx.doi.org/10.1007/978-3-662-43948-7\_41}, full version available
  as arXiv:1402.3774v2.

\bibitem{karp}
Karp, R.: Reducibility among combinatorial problems. In: Miller, R., Thatcher,
  J., Bohlinger, J. (eds.) Complexity of Computer Computations, pp. 85--103.
  The IBM Research Symposia Series, Springer US (1972),
  \url{http://dx.doi.org/10.1007/978-1-4684-2001-2\_9}

\bibitem{konig}
König, D.: Graphok és alkalmazásuk a determinásnsok és a halmazok
  elméletére (in hungarian). Mathematikai és Természettudományi Értesitö
   34,  104--119 (1916)

\bibitem{lovasz-plummer}
Lov{\'a}sz, L., Plummer, D.: Matching Theory. North-Holland Mathematics
  Studies, Elsevier Science (1986),
  \url{https://books.google.cz/books?id=mycZP-J344wC}

\end{thebibliography}

\end{document}